\documentclass[lettersize,journal]{IEEEtran}
\usepackage[T1]{fontenc}
\usepackage{textcomp}
\usepackage{balance}
\usepackage{stfloats}
\usepackage{amsmath,amssymb,amsthm,bm}
\usepackage{graphicx,booktabs,array,cite,url}
\usepackage[caption=false,font=normalsize,labelfont=sf,textfont=sf]{subfig}
\usepackage[hidelinks]{hyperref}
\newtheorem{theorem}{Theorem}
\newtheorem{lemma}{Lemma}
\newtheorem{proposition}{Proposition}
\newtheorem{corollary}{Corollary}
\newtheorem{remark}{Remark}
\newcommand{\CN}{\mathcal{CN}}
\newcommand{\E}{\mathbb{E}}
\newcommand{\Prob}{\mathbb{P}}
\newcommand{\R}{\mathbf{R}}
\newcommand{\U}{\mathbf{U}}
\newcommand{\I}{\mathbf{I}}
\newcommand{\h}{\mathbf{h}}
\newcommand{\z}{\mathbf{z}}
\newcommand{\tr}{\operatorname{tr}}

\graphicspath{{figures/}}
\begin{document}
\title{Certified Outage Dimensioning of Fluid Antenna Receivers: Required SNR, Aperture, and Port Count}
\author{Yuanjun Shen,~\IEEEmembership{Member,~IEEE}%
\thanks{Yuanjun Shen is with the National Key Laboratory of Radar Detection and Sensing, Department of Electronic Engineering, Xidian University, Xi'an 710071, China (e-mail: yuanjun.shen@xidian.edu.cn).}%
\thanks{Manuscript received April 19, 2021; revised August 16, 2021.}}

\markboth{Journal of \LaTeX\ Class Files,~Vol.~14, No.~8, August~2021}%
{Shell \MakeLowercase{\textit{et al.}}: A Sample Article Using IEEEtran.cls for IEEE Journals}

\IEEEpubid{0000--0000/00\$00.00~\copyright~2021 IEEE}

\maketitle
\begin{abstract}
We study the SNR, aperture, and port count needed to meet a target outage in a single-user downlink with a fixed transmit antenna and a single-RF fluid antenna receiver. For known spatially correlated Rayleigh channels and ideal port selection, a sampling-free residual certificate controls reduced-rank outage error over a continuous SNR interval. Positive Bessel features support large arrays, and bounded importance sampling converts the certificate into simultaneous design intervals. At 1 bit/s/Hz and outage $10^{-5}$, a four-wavelength, ten-port receiver requires 4.40--4.55 dB. Reducing its port count from 256 to ten costs 0.70--1.25 dB under a joint 95\% probability statement. At 3.5 wavelengths, ten ports require 4.90--5.05 dB. Integer-port screening across 3--8 dB shows how aperture and port requirements change with available SNR; an effective-SNR transformation extends these decisions to other rates. The reported intervals include model and sampling uncertainty under checked Gaussian covariance inputs. A constructive slow-growth rank bound and a 16--1024-port study support compact evaluation as port count grows.

\end{abstract}

\begin{IEEEkeywords}
Fluid antenna systems, Gaussian comparison, importance sampling, outage probability, reduced-rank models, spatial correlation.
\end{IEEEkeywords}

\section{Introduction}
\IEEEPARstart{W}{e} consider a single-user downlink in which a base station transmits from a fixed antenna and a user terminal receives through a fluid antenna system (FAS) with one active radio-frequency (RF) chain. In each fading block, the receiver ideally selects the strongest of $N$ spatially correlated ports distributed over an aperture of $W$ wavelengths. This established setting isolates the diversity available from spatial selection \cite{wong2021,khammassi,new2024}. The joint port-fading law is essential because a reliability failure occurs when all candidate ports are weak. Position-adaptive antennas form part of a broader reconfigurable-wireless research direction \cite{mimoevolution2022,wong2022topics,martinez2022liquid}. Hardware progress includes fabricated surface-wave-enabled fluid antennas with measured operation in the 24--30 GHz band \cite{surfacewave2026}, as well as pixel-based antennas that realize rapid switching among effective port states \cite{pixel2025}. These implementations motivate receiver-level reliability analysis under an explicitly specified port-fading model.

Receiver dimensioning requires the SNR needed to meet a target outage $p$ for a specified $(W,N)$, or the aperture and port count needed at an available SNR. Approximation errors matter particularly at targets such as $10^{-5}$. We formulate these requirements as certified SNR intervals and an SNR--aperture--port design map. Nondominated feasible candidates describe resource tradeoffs within that map.

Existing FAS analyses characterize the joint port-fading distribution in several ways. Full-correlation and eigenmode-based analyses clarified the effect of finite aperture and diminishing gains from adding ports \cite{khammassi,new2024}. Spatial block-correlation models simplified outage evaluation \cite{block2024}. Copula approaches separated marginal fading distributions from dependence modeling \cite{copula2024}. Diversity and coded-modulation studies connected ideal selection gains with time mismatch and practical coding \cite{psomas2024}. More recently, geometric analyses and equivalent-mode approximations revisited the connection between aperture and effective dimension \cite{geometric2026,edof2026}. Outage estimates also inform resource-allocation and fairness designs \cite{rsma2026}. Spatial selection also underlies multiuser fluid-antenna multiple access \cite{fama2022}; the present work isolates the single-user reliability calculation.

\IEEEpubidadjcol
Tractable outage models offer useful tools for receiver design, but approximation uncertainty can affect whether a reliability target is met. For $N=12$, $W=0.5$, and $\theta=1$ at 10~dB, the fixed block-model calibration of Section~\ref{sec:block-model-comparison} predicts about $5.2\times10^{-4}$ against a full-covariance reference estimate of $1.8\times10^{-3}$, placing the two on opposite sides of a $10^{-3}$ target, while a rank-three energy truncation of the same covariance differs from that reference by about 5\% (Section~\ref{sec:energy-accuracy}). A channel-energy criterion and a relative-outage criterion thus measure different aspects of approximation accuracy, and neither reports the error that decides feasibility. We therefore complement point estimates with a probability bracket whose upper bound certifies sufficiency and whose lower bound excludes inadequate resources.

The conservative direction of unnormalized Karhunen--Lo\`eve (KL) outage truncation follows from Anderson's inequality and was used in \cite{kl2026}. We build on that upper bound by controlling the residual approximation error. Port-count selection \cite{new2024} and channel reconstruction from pilots or partial CSI \cite{oversampling2025,partial2026} address complementary deployment and acquisition questions. Here, the covariance is specified, selection is ideal, and the computational rank is chosen for a fixed physical array.

\subsection{Question and Contribution}
We quantify how outage uncertainty affects receiver dimensioning through four results.
\begin{enumerate}
\item We obtain necessary and sufficient SNRs for a target outage (P1) and certified feasible, infeasible, or unresolved aperture--port candidates (P2). These decisions form a finite design map; its resource frontier is identified when all relevant candidates are resolved.
\item A multiplicative residual certificate supplies the model-error guarantee without probability integration. A witness at the smallest threshold covers an entire SNR interval for the same covariance; bounded importance sampling supplies separate numerical intervals.
\item A positive Bessel-feature construction supplies controlled residual variances for large linear FAS arrays. An explicit sufficient condition gives a constructive $O(\log\log N/\log\log\log N)$ rank bound at fixed aperture, threshold, and tolerance. This controls the latent dimension of large-$N$ dimensioning calculations; each retained sample still checks all physical ports at cost $O(Nr)$.
\item Fixed experiments connect aperture and port choices to their power cost. Integer-port screening at ten apertures and multiple SNRs, required-SNR intervals for the selected low-port receivers, and an outage-matched diversity index complement the fixed-density aperture curve and the 16--1024-port rank study. The port study reports both certified minima and integer intervals where a decision remains unresolved.
\end{enumerate}
The residual criterion controls relative outage error for the physical-port event, complementing the power-loss criterion in \cite[Prop. 1 and Remark 3]{kl2026}. Gaussian comparison supplies the model bounds, and bounded importance sampling supplies their numerical intervals. Figure~\ref{fig:framework} summarizes the calculation.

\subsection{Organization and Notation}
Section II specifies the channel and design objective. Section III develops the residual certificate, its large-port implementation, and design inversion. Section IV supplies numerical intervals, and Section V presents the design results and controls. Sections VI and VII discuss scope and conclude. The appendices collect auxiliary probability identities and the limits of rank and geometry approximations. Bold capitals and lowercase letters denote matrices and vectors. The notation $\mathbf A\preceq\mathbf B$ means $\mathbf B-\mathbf A$ is positive semidefinite (PSD). The distribution $\CN(\bm0,\R)$ is proper complex Gaussian; $\|\cdot\|_\infty$ is the largest component magnitude. All logarithms are natural unless specified otherwise; $\mathrm{i}=\sqrt{-1}$.

\section{System Model and Approximation Objective}
Consider a single-user narrowband downlink from one fixed base-station antenna to a fluid antenna user terminal with $N$ candidate receive ports and one active RF chain. The channel is constant within a fading block. Its normalized port vector is
\begin{equation}
 \h\sim\CN(\bm0,\R),\qquad \R\succeq\bm0.
 \label{eq:model}
\end{equation}
For the nominal model $R_{ii}=1$. With perfect instantaneous port selection, average SNR per port $\bar\gamma$, and decoding threshold $\theta=2^{R_0}-1>0$ for rate $R_0$ bit/s/Hz, outage is
\begin{equation}
 F_\R(x)=\Prob\{\|\h\|_\infty^2\le x\},\qquad x=\theta/\bar\gamma.
 \label{eq:outage}
\end{equation}
The equality boundary has probability zero for the nondegenerate cases used here. A linear isotropic-scattering example has normalized aperture $W$ and
\begin{equation}
 R_{ij}=J_0\!\left(2\pi W\frac{|i-j|}{N-1}\right),\quad N>1,
 \label{eq:jakes}
\end{equation}
where $J_0$ is the ordinary Bessel function. The bounds below apply to any covariance satisfying \eqref{eq:model}, including anisotropic and planar arrays.

The physical aperture is $W\lambda$, where $\lambda=c/f_c$ is the
free-space wavelength, and the port spacing is $W\lambda/(N-1)$.
For a 28-GHz size example within the operating band of
\cite{surfacewave2026}, $\lambda\approx1.071$ cm: $W=4$ and $8$
correspond to 4.28 and 8.57 cm. At 3.5 GHz the same normalized
apertures are 34.3 and 68.5 cm, respectively. Thus the intended
device size and carrier determine the admissible $W$ range. This
unit conversion leaves the normalized fading model unchanged;
an actual deployment requires its own propagation and antenna response.

Write $\R=\U\bm\Lambda\U^H$, with $\lambda_1\ge\cdots\ge\lambda_N\ge0$. Define
\begin{equation}
 \R_r=\sum_{j=1}^r\lambda_j\mathbf u_j\mathbf u_j^H,
 \quad \mathbf E_r=\R-\R_r,
 \quad e_r=\frac{\sum_{j>r}\lambda_j}{\tr\R}.
 \label{eq:truncation}
\end{equation}
An energy rule chooses the smallest $r$ satisfying $e_r\le\epsilon_E$. This is an average channel-energy criterion. Our criterion instead controls
\begin{equation}
 \frac{U_r(x)-F_\R(x)}{F_\R(x)},\qquad U_r(x)=F_{\R_r}(x).
 \label{eq:target}
\end{equation}
The numerator is nonnegative by the known Gaussian comparison result used below. The diagonal of $\R_r$ is generally below one. Rescaling it to unit diagonal changes the covariance order and invalidates this particular guarantee. The modes in $\U^H\h$ are independent; the physical-port variables in $\h$ generally are not.

\subsection{Receiver-Dimensioning Problems}
The reference rate in the numerical study is $R_0=1$ bit/s/Hz,
which gives $\theta=1$. For any other rate, define
$\gamma_{\rm eff,dB}=\gamma_{0,\rm dB}-10\log_{10}(2^{R_0}-1)$.
Then $x=10^{-\gamma_{\rm eff,dB}/10}$, so a P2 decision at that
effective SNR applies exactly to the specified rate and available
SNR. Changing rate at fixed $\gamma_0$ therefore changes feasibility.

For a fixed geometry family, write $\R(W,N)$ for its specified covariance. We study two design problems under \eqref{eq:outage}:

\noindent\textit{P1: Required SNR.} Given $(W,N)$, $p$, and $\theta$, bracket
\begin{equation}
 \gamma_{\rm req}(W,N)=\inf\{\gamma>0:F_{\R(W,N)}(\theta/\gamma)\le p\}.
 \label{eq:p1}
\end{equation}
\noindent\textit{P2: Aperture and ports.} Given available SNR $\gamma_0$, $p$, $\theta$, and a finite candidate set $\mathcal D$, identify which $(W,N)\in\mathcal D$ meet the target and determine the feasible candidates not dominated in both resources. Alternatively, select a feasible candidate minimizing a declared deployment cost.

A candidate dominates another if its aperture and port count are both no larger and at least one is smaller. The frontier is defined within $\mathcal D$, with unresolved candidates retained when assessing an optimum. The computational rank $r$ and physical port count $N$ are separate design quantities.
\begin{figure*}[t]
\centering\includegraphics[width=\textwidth]{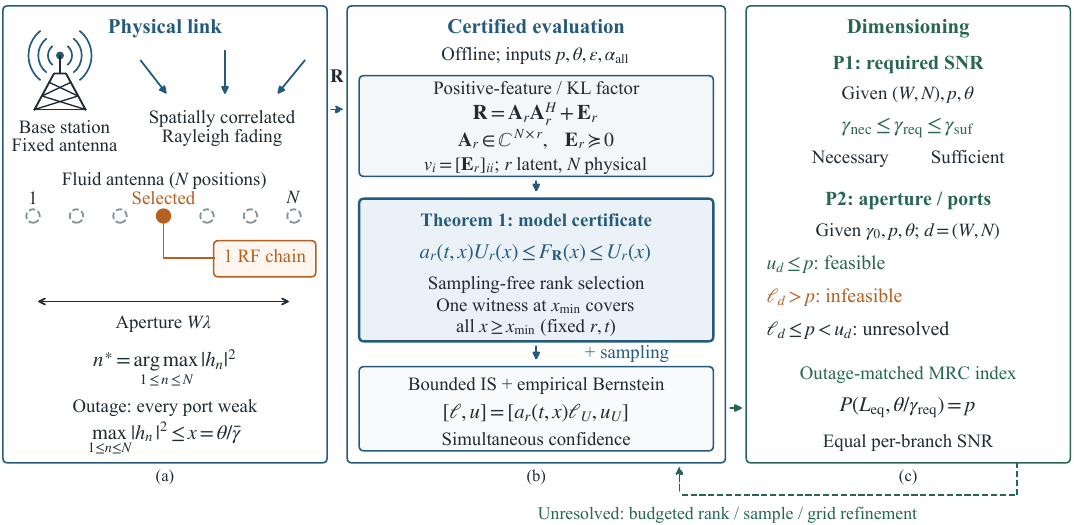}
\caption{Physical link and offline receiver dimensioning. (a) The marked positions are alternatives; the maximizing port $n^{\ast}$ connects to a single RF chain. (b) At fixed covariance, rank, and witness, certification at $x_{\min}$ controls model error for all $x\ge x_{\min}$, equivalently $\bar\gamma\le\theta/x_{\min}$. Here $v_i=[\mathbf E_r]_{ii}$, $r$ is the latent dimension, and $N$ is the physical candidate count. A sampling interval $[\ell_U,u_U]$ for $U_r$ propagates to $[\ell,u]$ for full outage. (c) The resulting intervals support SNR inversion, candidate classification, and a distributional MRC comparison. Dashed arrows carry model information or budgeted refinement.}
\label{fig:framework}
\end{figure*}

\section{Residual Certificates for Receiver Dimensioning}\label{sec:certificate}
\subsection{PSD Comparison and Its Tail Limitation}
Set $\tau_r=\lambda_{r+1}$ for $r<N$ and $\tau_N=0$. The
spectral residual obeys $\bm0\preceq\mathbf E_r\preceq\tau_r\I$, so
\begin{equation}
 \mathbf B_r=\R_r+\tau_r\I,\qquad
 \R_r\preceq\R\preceq\mathbf B_r.
 \label{eq:order}
\end{equation}
\begin{lemma}[Outage bracket]\label{thm:bracket}
For any covariance in \eqref{eq:model}, any $x>0$, and $0\le r\le N$,
\begin{equation}
 L_r(x):=F_{\mathbf B_r}(x)\ \le\ F_\R(x)\ \le\ U_r(x).
 \label{eq:bracket}
\end{equation}
Both bounds coincide with $F_\R$ at $r=N$.
\end{lemma}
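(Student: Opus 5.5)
The plan is to reduce both inequalities to one monotonicity statement: if $\mathbf A\preceq\mathbf C$ are PSD, then $F_{\mathbf A}(x)\ge F_{\mathbf C}(x)$ for every $x>0$. Applying it to the two halves of the ordering \eqref{eq:order}, namely $\R_r\preceq\R$ and $\R\preceq\mathbf B_r$, gives $U_r(x)=F_{\R_r}(x)\ge F_\R(x)$ and $F_\R(x)\ge F_{\mathbf B_r}(x)=L_r(x)$. The ordering itself follows from the spectral decomposition. We have $\mathbf E_r=\sum_{j>r}\lambda_j\mathbf u_j\mathbf u_j^H$. Each $\lambda_j\ge0$, so $\mathbf E_r\succeq\bm0$. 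Each $\lambda_j\le\lambda_{r+1}=\tau_r$ and $\sum_{j>r}\mathbf u_j\mathbf u_j^H\preceq\I$, so $\mathbf E_r\preceq\tau_r\I$. Adding $\R_r$ throughout gives \eqref{eq:order}.

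For the monotonicity, write $\mathbf C=\mathbf A+\mathbf D$ with $\mathbf D\succeq\bm0$. Let $\h_A\sim\CN(\bm0,\mathbf A)$ and $\mathbf g\sim\CN(\bm0,\mathbf D)$ be independent, so that $\h_A+\mathbf g\sim\CN(\bm0,\mathbf C)$. The event $K=\{\mathbf v\in\mathbb C^N:\|\mathbf v\|_\infty^2\le x\}$ is a product of closed discs. As a subset of $\mathbb R^{2N}$ it is convex and symmetric about the origin. The law of $\h_A$, viewed as a real Gaussian vector in $\mathbb R^{2N}$, is a centered Gaussian, and its density (on its support) is symmetric and unimodal. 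Anderson's inequality then gives $\Prob\{\h_A+\mathbf y\in K\}\le\Prob\{\h_A\in K\}$ for every fixed shift $\mathbf y$. Conditioning on $\mathbf g=\mathbf y$ and averaging over $\mathbf g$ yields $F_{\mathbf C}(x)\le F_{\mathbf A}(x)$.

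The one technical point, and the main obstacle, is that $\R_r$ (and possibly $\R$) can be singular, so $\h_A$ has no density on $\mathbb R^{2N}$. I would handle this in the first way that works. Anderson's theorem applies on the support subspace $S$ of $\h_A$ after decomposing the shift as $\mathbf y=\mathbf y_S+\mathbf y_\perp$: each slice $K\cap(S-\mathbf y_\perp)$ is convex, and it is symmetric only after a suitable translation. To avoid that bookkeeping, I would instead replace $\mathbf A$ by $\mathbf A+\delta\I$ and $\mathbf C$ by $\mathbf C+\delta\I$, which preserves the order. Applying the nondegenerate result and letting $\delta\downarrow0$, the Gaussian laws converge weakly. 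Since $K$ is closed, the portmanteau theorem gives $\limsup_{\delta\downarrow0}F_{\mathbf C+\delta\I}(x)\le F_{\mathbf C}(x)$, which is the wrong direction for the bound on $F_{\mathbf A}$. I would fix this by also applying the result to the slightly shrunk set $K_{x'}$ with $x'<x$ and using $F(x^-)$. Alternatively, I would use the paper's standing remark that the boundary has probability zero in the cases of interest, which gives continuity of $F$ in $x$ and lets the limits pass.

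Finally, at $r=N$ we have $\R_N=\R$ and $\tau_N=0$, so $\mathbf B_N=\R$. Both bounds therefore reduce to $F_\R$, which proves the last claim of Lemma~\ref{thm:bracket}.
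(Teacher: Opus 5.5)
Your proposal is correct and follows the paper's proof. Both derive the spectral order $\R_r\preceq\R\preceq\mathbf B_r$, apply Anderson's inequality to the symmetric convex polydisc conditionally on independent centered Gaussian noise, and regularize singular covariances by $\delta\I$. Your extra care in the $\delta\downarrow0$ step is sound but unnecessary: for every $x>0$ and every covariance, $\Prob\{\|\h\|_\infty^2=x\}\le\sum_i\Prob\{|h_i|^2=x\}=0$, so the boundary is always null and the portmanteau limits pass directly.
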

\begin{proof}
The outage set $\mathcal D_x=\{\h:|h_i|\le\sqrt{x},\ i=1,\ldots,N\}$
is centrally symmetric and convex in $2N$ real dimensions.
Anderson's inequality \cite{anderson} says that adding independent
centered Gaussian noise cannot increase its probability. Applying
this to \eqref{eq:order} gives both bounds. Singular covariances
follow by adding $\delta\I$ and taking $\delta\downarrow0$.
\end{proof}
The upper bound is the established unnormalized KL conservatism
\cite{kl2026}. The covariance majorant also gives a lower bound,
but may lose the small residual's directional structure. For
$\R\succ0$, the small-ball identity in Appendix~\ref{app:polydisc}
gives $F_\R(x)\sim x^N/\det\R$. Hence
for any $\mathbf M\succeq\R\succ0$,
\begin{equation}
 \lim_{x\downarrow0}\frac{F_\R(x)}{F_{\mathbf M}(x)}
 =\frac{\det\mathbf M}{\det\R}.
 \label{eq:detbarrier}
\end{equation}
For $\mathbf M=\mathbf B_r$, this ratio is
$\prod_{j\le r}(1+\tau_r/\lambda_j)\prod_{j>r}(\tau_r/\lambda_j)$.
If it exceeds $1+\epsilon$, this particular lower model cannot support
that relative-error budget for all sufficiently small thresholds. 

Conditional evaluation and ratio-based rank tests for this
majorant are given in Appendix~\ref{app:majorants}--\ref{app:covrank}.
The certificate below instead uses the individual residual variances
to control relative outage error without evaluating $L_r$.

\subsection{A Sampling-Free Multiplicative Certificate}
Write $\mathbf A_r=[\sqrt{\lambda_1}\mathbf u_1,\ldots,
\sqrt{\lambda_r}\mathbf u_r]$ and $\h=\mathbf A_r\z+\mathbf e$,
where $\z\sim\CN(\bm0,\I_r)$ and the residual
$\mathbf e\sim\CN(\bm0,\mathbf E_r)$ are independent. The reduced
outage probability is
\begin{equation}
 U_r(x)=\E_\z\mathbf1\{\|\mathbf A_r\z\|_\infty^2\le x\}.
 \label{eq:upper_integral}
\end{equation}
Let $v_i=[\mathbf E_r]_{ii}$. For $0<t<1$, define
\begin{align}
 q_r(t,x)&=\sum_{i:v_i>0}\exp(-t^2x/v_i),\\
 a_r(t,x)&=(1-t)^{2r}[1-q_r(t,x)]_+.
 \label{eq:residualcoeff}
\end{align}
\begin{theorem}[Relative certificate from residual variances]\label{prop:residual}
For $1\le r\le N$ and the Gaussian KL decomposition above,
\begin{equation}
 a_r(t,x)U_r(x)\le F_\R(x)\le U_r(x).
 \label{eq:multbracket}
\end{equation}
Thus any witness $t$ satisfying $a_r(t,x)\ge(1+\epsilon)^{-1}$ certifies $(U_r-F_\R)/F_\R\le\epsilon$. If the residual is zero, use $t=0$ and $a_r=1$.
\end{theorem}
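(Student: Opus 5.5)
The upper bound $F_\R(x)\le U_r(x)$ is already the right-hand inequality of Lemma~\ref{thm:bracket}, since $\R_r\preceq\R$. So I will prove only the lower bound $a_r(t,x)U_r(x)\le F_\R(x)$. The plan is to split the threshold budget between the retained part and the residual, using $\h=\mathbf A_r\z+\mathbf e$ with $\z$ and $\mathbf e$ independent. By the triangle inequality, the outage event contains the event
\[
\{\|\mathbf A_r\z\|_\infty\le(1-t)\sqrt{x}\}\cap\{\|\mathbf e\|_\infty\le t\sqrt{x}\}.
\]
Independence then gives
\[
F_\R(x)\ \ge\ U_r\bigl((1-t)^2x\bigr)\,\Prob\{\|\mathbf e\|_\infty\le t\sqrt x\}.
\]
It remains to bound each factor from below.

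\emph{Residual factor.} Each component $e_i\sim\CN(0,v_i)$, so $|e_i|^2$ is exponential with mean $v_i$. Hence $\Prob\{|e_i|^2>t^2x\}=\exp(-t^2x/v_i)$ when $v_i>0$, and this probability is zero when $v_i=0$. A union bound over the ports gives
\[
\Prob\{\|\mathbf e\|_\infty\le t\sqrt x\}\ \ge\ 1-q_r(t,x).
\]
Since a probability is nonnegative, the bound also holds with $[1-q_r(t,x)]_+$. No assumption on the dependence among the $e_i$ is needed.

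\emph{Retained factor.} I will show that $U_r(c^2x)\ge c^{2r}U_r(x)$ for $c=1-t\in(0,1]$. Let $K=\{\z\in\mathbb C^r:\|\mathbf A_r\z\|_\infty\le\sqrt x\}$. Viewed in $\mathbb R^{2r}$, this set is symmetric and convex, and it may be unbounded; nothing here requires boundedness. The standard density is $\phi(\z)=\pi^{-r}e^{-\|\z\|^2}$. Substituting $\z=c\mathbf w$ gives
\[
U_r(c^2x)=\Prob\{\z\in cK\}=c^{2r}\int_K\phi(c\mathbf w)\,d\mathbf w\ \ge\ c^{2r}\int_K\phi(\mathbf w)\,d\mathbf w=c^{2r}U_r(x),
\]
because $\phi$ is radially decreasing and $c\le1$. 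Combining the two factors gives $F_\R(x)\ge(1-t)^{2r}[1-q_r]_+U_r(x)=a_r(t,x)U_r(x)$.

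\emph{Corollary and main obstacle.} If $a_r\ge(1+\epsilon)^{-1}$, then $U_r\le(1+\epsilon)F_\R$, which is the stated relative-error certificate. In the zero-residual case $\h=\mathbf A_r\z$ almost surely, so $F_\R=U_r$ and $t=0$, $a_r=1$ is valid. I expect the main obstacle to be the threshold-scaling step for $U_r$. The exponent must be $2r$, the real dimension of the latent $\z$, rather than $2N$. The dilation-and-density argument above handles this directly, without needing any structure of $\mathbf A_r$.
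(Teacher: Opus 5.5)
Your proposal is correct and follows the paper's proof essentially step for step. It uses the same split of the threshold into $(1-t)\sqrt{x}$ and $t\sqrt{x}$, independence plus the marginal exponential-tail union bound for the residual factor, the same dilation $\z=c\mathbf w$ with the radially decreasing density $\pi^{-r}e^{-\|\z\|^2}$ to extract $(1-t)^{2r}$, and Lemma~\ref{thm:bracket} for the upper bound; one small addition is that dividing by $F_\R$ in the certificate is justified because $F_\R\ge a_rU_r>0$ whenever $a_r>0$.
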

\begin{proof}
Write $\h=\mathbf A_r\z+\mathbf e$ with $\z\sim\CN(\bm0,\I_r)$ independent of $\mathbf e$. The events $\|\mathbf A_r\z\|_\infty\le(1-t)\sqrt{x}$ and $\|\mathbf e\|_\infty\le t\sqrt{x}$ imply outage. Independence and the marginal Gaussian tail union bound give
\begin{equation}
 F_\R(x)\ge U_r((1-t)^2x)[1-q_r(t,x)]_+.
\end{equation}
For $s=1-t$ and $D=\{\z:\|\mathbf A_r\z\|_\infty^2\le x\}$, a change of variables in $r$ complex coordinates gives
\begin{align}
 U_r(s^2x)&=s^{2r}\int_D\pi^{-r}e^{-s^2\|\z\|^2}\,d\z\nonumber\\
 &\ge s^{2r}\int_D\pi^{-r}e^{-\|\z\|^2}\,d\z=s^{2r}U_r(x).
\end{align}
Combining the two inequalities proves the lower bound; the upper bound is the known Anderson direction. No independence across residual ports is required.
\end{proof}

The rank search evaluates only the positive sums $v_i=\sum_{j>r}\lambda_j|u_{ij}|^2$ and a scalar witness. A globally optimal $t$ is unnecessary: every feasible witness gives a valid sufficient rank. We use a fixed logarithmic/linear grid with local refinement and evaluate the chosen coefficient at 80 decimal digits. Numerical-input controls are detailed in Section~\ref{sec:protocol}.

For fixed $r,t$, $a_r(t,x)$ is nondecreasing in $x$. A witness passing at $x_{\min}$ therefore guarantees the same model-error budget for all $x\ge x_{\min}$, or all average SNRs up to $\theta/x_{\min}$, at fixed covariance. This deterministic guarantee holds over the continuous SNR range. The union bound and Gaussian dilation can increase the sufficient rank; evaluation cost also depends on the estimator.

\subsection{Large-Port Evaluation with Positive Features}
The residual argument applies more generally whenever
$\R=\mathbf A\mathbf A^H+\mathbf E$ with $\mathbf E\succeq0$ and
$\mathbf A$ having $r$ columns. An independent proper Gaussian
decomposition exists, and Theorem~\ref{prop:residual} uses
$U_{\mathbf A}=F_{\mathbf A\mathbf A^H}$ and $v_i=E_{ii}$.
Consequently, an explicitly controlled positive feature approximation
can be used without computing the numerically tiny eigenvalues of a
dense covariance matrix.

For centered line positions $p_i\in[-W/2,W/2]$, Neumann's addition
formula \cite[Sec. 10.23(ii)]{dlmf} gives
\begin{equation}
 R_{ij}=J_0(2\pi p_i)J_0(2\pi p_j)
       +2\sum_{k=1}^{\infty}J_k(2\pi p_i)J_k(2\pi p_j).
 \label{eq:besselfeature}
\end{equation}
Let $\mathbf G_m$ contain the first $m$ features, including order zero.
Its omitted covariance is PSD. The real-argument bound
$|J_k(z)|\le (|z|/2)^k/k!$, obtained from Poisson's integral
\cite[Eq. 10.9.4]{dlmf}, gives the uniform diagonal bound
\begin{equation}
 d_m(W)=\frac{2b^{2m}}{(m!)^2[1-b^2/(m+1)^2]},\quad b=\pi W/2,
 \label{eq:featuretail}
\end{equation}
provided $m+1>b$. Indeed, the omitted squared-feature series is bounded by a geometric
tail whose successive ratios are at most $b^2/(m+1)^2$.
With the thin SVD $\mathbf G_m=\mathbf V\bm\Sigma\mathbf Q^H$,
retain $\mathbf A_r=\mathbf V_r\bm\Sigma_r$. Then
\begin{equation}
 v_i\le\sum_{j>r}\sigma_j^2|V_{ij}|^2+d_m(W).
 \label{eq:featurevariance}
\end{equation}
Both components are nonnegative; subtracting a nearly unit retained
energy from $R_{ii}=1$ is unnecessary. These are KL modes of the finite
feature covariance.
The omitted-feature term connects their certificate to the original
Jakes covariance.

An explicit, conservative sufficient test also follows. Set
\begin{equation}
 t_\epsilon=1-(1+\epsilon)^{-1/(4r)},\qquad
 \eta_\epsilon=1-(1+\epsilon)^{-1/2}.
\end{equation}
For any residual satisfying $v_i\le\bar v$, the condition
\begin{equation}
 \bar v\le\frac{t_\epsilon^2x}{\log(N/\eta_\epsilon)}
 \label{eq:explicitvariance}
\end{equation}
implies $q_r\le\eta_\epsilon$ and $a_r\ge(1+\epsilon)^{-1}$.
For direct retention of $r$ Bessel features, one may use
$\bar v=d_r(W)$. For the exact rank-$r$ KL truncation, the optimal
trace residual is at most $Nd_r(W)$, so $\bar v=Nd_r(W)$ is a valid,
usually much looser, choice. Numerical residuals in
\eqref{eq:featurevariance} are used for the experiments.

\begin{remark}[Port count and aperture]
Rapid feature-tail decay explains the weak port-count dependence on
the tested grids. The sufficient condition \eqref{eq:explicitvariance}
contains $N$ through the union bound. For example, if a fixed
rank leaves $v_i\ge v_0>0$ at at least $cN$ ports, then for every $t<1$,
$q_r(t,x)\ge cN\exp(-x/v_0)$. This particular certificate eventually
becomes vacuous at fixed $r,x$. This limitation concerns the marginal union bound. Control of the
continuous residual process is needed for an $N$-uniform certificate.
\end{remark}

\begin{corollary}[Constructive slow rank growth]\label{cor:slowrank}
Fix $W>0$, $x_{\min}>0$, and $\epsilon>0$. For any $N$ positions in
$[-W/2,W/2]$ under the exact Jakes model, direct Bessel-feature
retention admits a sufficient rank
\begin{equation}
 r=O\!\left(\frac{\log\log N}{\log\log\log N}\right),\qquad N\to\infty,
 \label{eq:slowrank}
\end{equation}
for the residual certificate at every $x\ge x_{\min}$.
In particular, a sufficient rank of order $O(\log\log N)$ exists.
The constants depend on $W,x_{\min},\epsilon$.
\end{corollary}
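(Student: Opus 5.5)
The plan is to combine the explicit sufficient test \eqref{eq:explicitvariance} with the feature-tail bound \eqref{eq:featuretail}. We retain the first $r$ Bessel features directly, so $\mathbf A=\mathbf G_r$ and $\mathbf E$ is the omitted part of \eqref{eq:besselfeature}. That omitted part is PSD, and its diagonal is uniformly bounded by $\bar v=d_r(W)$ for every choice of positions in $[-W/2,W/2]$. By the general decomposition remark preceding \eqref{eq:besselfeature}, Theorem~\ref{prop:residual} applies with $U_{\mathbf A}=F_{\mathbf A\mathbf A^H}$. The condition \eqref{eq:explicitvariance} then certifies relative error $\epsilon$ at $x_{\min}$. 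Since $a_r(t,x)$ is nondecreasing in $x$ at fixed $r,t$, the same witness $t_\epsilon$ covers every $x\ge x_{\min}$. It therefore suffices to find the smallest $r$ with
\begin{equation*}
 d_r(W)\,\log(N/\eta_\epsilon)\le t_\epsilon^2 x_{\min},
\end{equation*}
and to show that this $r$ grows at the stated rate.

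Next I bound the two sides separately. The factor $\eta_\epsilon$ depends only on $\epsilon$, so the left side carries $\log N+O(1)$. The witness $t_\epsilon=1-(1+\epsilon)^{-1/(4r)}$ depends on $r$. A Taylor expansion gives $t_\epsilon\ge c_\epsilon/r$ with $c_\epsilon=\log(1+\epsilon)/8$ say, valid for all $r\ge1$ because $1-e^{-y}\ge y/2$ for small $y$. The right side is therefore at least $c_\epsilon^2x_{\min}/r^2$. For $r\ge 2b$, where $b=\pi W/2$, the denominator factor of $d_r$ is at least $3/4$, so $d_r(W)\le (8/3)b^{2r}/(r!)^2$. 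A sufficient condition is then
\begin{equation*}
 \frac{(r!)^2}{r^2 b^{2r}}\ \ge\ C\log N,\qquad C=C(W,x_{\min},\epsilon).
\end{equation*}

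Finally I invert this. Taking logarithms and using Stirling's formula, the left side has logarithm $2r\log r-O(r)$, where the constant in $O(r)$ depends on $b$. The condition therefore reads $2r\log r-K r\ge\log\log N+\log C$. The inversion of $r\log r=y$ is $r\sim y/\log y$. With $y=\tfrac12\log\log N$, this gives $r=O(\log\log N/\log\log\log N)$. The weaker $O(\log\log N)$ claim follows immediately.

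The main obstacle is this inversion with the $-Kr$ and $-2\log r$ corrections present. I would handle it by choosing the explicit candidate
\begin{equation*}
 r=\Big\lceil \frac{A\log\log N}{\log\log\log N}\Big\rceil
\end{equation*}
with $A$ large. For large $N$ one has $\log r\ge\tfrac12\log\log\log N$. Then $2r\log r-Kr\ge r(\log\log\log N-K)\ge(A/2)\log\log N$ once $N$ is large enough. Finitely many small $N$ are absorbed into the constant, while making sure that $r\ge\max(2b,1)$ so the geometric-tail bound $m+1>b$ holds.
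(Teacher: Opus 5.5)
Your proposal is correct and follows essentially the paper's proof. Both arguments combine the explicit test \eqref{eq:explicitvariance} with $\bar v=d_r(W)$, a $1/r$ lower bound on $t_\epsilon$, and a factorial bound on $d_r$, then choose $r\asymp L/\log L$ with $L=\log\log N$; the paper takes $r=\max\{1,\lceil\sqrt{2}b\rceil,\lceil L/\log L\rceil\}$ with $L=\log\log(N/\eta_\epsilon)$ and shows that the logarithm of the left side is $-L+o(L)$. One harmless slip: $1-e^{-y}\ge y/2$ holds only for $y\lesssim1.59$, so $t_\epsilon\ge\log(1+\epsilon)/(8r)$ can fail at small $r$ when $\epsilon$ is very large, which does not matter since only large $r$ is used, and the paper's constant $k=ce^{-c}$ (from $1-e^{-y}\ge ye^{-y}$) holds for every $r\ge1$.
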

\begin{proof}
Put $c=\log(1+\epsilon)/4$ and $k=c e^{-c}$.
Then $t_\epsilon=1-e^{-c/r}\ge k/r$.
For $r+1\ge\sqrt{2}b$, the factorial bound $r!\ge(r/e)^r$ gives
$d_r\le4(eb/r)^{2r}$. Thus
\begin{equation}
 4r^2(eb/r)^{2r}\log(N/\eta_\epsilon)\le k^2x_{\min}
 \label{eq:slowranksufficient}
\end{equation}
implies \eqref{eq:explicitvariance}.
Set $L=\log\log(N/\eta_\epsilon)$ and
$r=\max\{1,\lceil\sqrt{2}b\rceil,\lceil L/\log L\rceil\}$
for $L>1$. This explicitly ensures $r+1\ge\sqrt{2}b$;
for sufficiently large $N$, the last term determines $r$.
The logarithm of the left side of \eqref{eq:slowranksufficient} is
$\log4+2\log r+L+2r(1+\log b-\log r)=-L+o(L)$,
which tends to minus infinity. The fixed-threshold witness extends
to all $x\ge x_{\min}$ by Theorem~\ref{prop:residual}.
\end{proof}
Corollary~\ref{cor:slowrank} gives an asymptotic construction by direct
feature retention in exact arithmetic. The numerical SVD implementation
uses the tighter computed port residuals in \eqref{eq:featurevariance}.
In implementation, one may certify both the direct-feature and SVD
constructions at the same threshold and tolerance, then select the
one using fewer latent coordinates. The experiments below use SVD.

Dense eigendecomposition costs $O(N^3)$; for $m\ll N$, the feature SVD
costs $O(Nm^2)$, residual preprocessing $O(Nm)$, and one witness
evaluation $O(N)$. A retained sample requires $O(Nr)$ arithmetic to evaluate all physical ports.

\subsection{Certified Link-Design Quantities}\label{sec:designinverse}
Let $p\in(0,1)$ be a target outage and let
$[\ell_j,u_j]$ cover $F_\R(x_j)$ simultaneously at a fixed threshold grid.
For the residual method, $\ell_j=a_r(x_j)\ell_{U,j}$ and $u_j=u_{U,j}$.
Write $\gamma_j=\theta/x_j$ in linear units and define
\begin{align}
 \gamma_{\rm nec}&=\max_{j:\ell_j>p}\gamma_j,&
 \gamma_{\rm suf}&=\min_{j:u_j\le p}\gamma_j.
 \label{eq:snrinverse}
\end{align}
If both sets are nonempty, the least SNR meeting the target obeys
\begin{equation}
 \gamma_{\rm nec}\le\gamma_{\rm req}\le\gamma_{\rm suf}.
 \label{eq:snrbracket}
\end{equation}
This follows directly from monotonicity: the first endpoint fails the
target and the second passes it on the same coverage event. An empty set leaves the corresponding endpoint unresolved.

The optional outage-capacity conversion is given in Appendix~\ref{app:capacity}.
Monotonicity converts the grid evaluations into bounds on the continuous
design quantity. Model error, sampling uncertainty, and grid spacing
all affect the interval width. The local estimate
$\Delta\gamma_{\rm dB}\simeq10\log_{10}(1+\epsilon)/d$
additionally requires an approximately constant outage slope $d$.
For any other rate threshold $\theta>0$, the same outage quantile $x_p=F_\R^{-1}(p)$ gives
$\gamma_{\rm req}(\theta)=\theta\gamma_{\rm req}(1)$:
every SNR bracket shifts by $10\log_{10}\theta$ dB.

For P2, let $[\ell_d,u_d]$ cover $F_{\R(d)}(\theta/\gamma_0)$
simultaneously for all $d=(W,N)\in\mathcal D$. Classify
\begin{equation}
 \begin{cases}
 u_d\le p &: \text{certified feasible},\\
 \ell_d>p &: \text{certified infeasible},\\
 \ell_d\le p<u_d &: \text{unresolved}.
 \end{cases}
 \label{eq:p2class}
\end{equation}
On the simultaneous coverage event, the first two classifications
are correct: feasibility concerns the receiver configuration meeting
the outage target at $\gamma_0$. An unresolved interval straddles the
target. Nondominated certified-feasible candidates form a safe
resource frontier. If every candidate is resolved, it is the true
frontier within $\mathcal D$. Otherwise, unresolved candidates must be
considered before declaring an optimum. Each geometry is evaluated with
its own covariance and certificate, so the classification also applies
to nonnested port grids.

\section{Reliable Evaluation and Numerical Certification}
\subsection{Gaussian Exponential Tilting}
Ordinary Monte Carlo resolves probabilities near $1/n$ poorly and cannot substantiate probabilities many orders below it. For a positive definite covariance $\mathbf C$ with eigenvalues $\mu_j$, choose $s\ge0$ and sample
\begin{equation}
 \h_s\sim\CN\big(\bm0,(\mathbf C^{-1}+s\I)^{-1}\big).
\end{equation}
The unbiased weight for its outage is
\begin{equation}
 w_s(\h_s)=\frac{e^{s\|\h_s\|^2}}{\det(\I+s\mathbf C)}
 \mathbf1\{\|\h_s\|_\infty^2\le x\}.
 \label{eq:IS}
\end{equation}
On the event, $\|\h_s\|^2\le Nx$, giving the deterministic bound
\begin{equation}
 0\le w_s\le b_s:=\frac{e^{sNx}}{\det(\I+s\mathbf C)}.
 \label{eq:weightbound}
\end{equation}
For a singular covariance the same formula uses its positive eigenvalues and Gaussian density ratios on its support. The determinant notation in \eqref{eq:weightbound} is still valid since zero eigenvalues contribute factors of one.

We select $s$ deterministically from the eigenvalues, before drawing samples. With $a=sx$, the implementation solves
\begin{equation}
 \sum_{j:\mu_j>0}\frac{\mu_j/x}{1+a\mu_j/x}=\frac{N}{1.6}
 \label{eq:tiltchoice}
\end{equation}
if the left side at $a=0$ exceeds the target; otherwise $a=0$. This brings typical proposed energy near the event boundary. The fixed value 1.6 approximates the optimal full-rank small-threshold constant $c_*\simeq1.5936$ derived in Appendix~\ref{app:tilt}. At finite SNR it serves as a moment-based tuning rule. All calculations use logarithmic normalizers.

A second tuning rule minimizes the analytical cap $b_s$ to reduce the range contribution to the confidence interval. With $a=sx$,
\begin{equation}
 \log b(a)=Na-\sum_j\log(1+a\mu_j/x).
\end{equation}
This function is convex. Its minimum is at zero if $\sum_j\mu_j/x\le N$; otherwise it solves
$\sum_j(\mu_j/x)/(1+a\mu_j/x)=N$.
This choice minimizes the range penalty in \eqref{eq:EB}; the variance contribution depends on the resulting weight distribution.

For numerical design we balance the variance and range terms using an
independent pilot. For a planned production count $n_\star$ and interval
failure probability $\alpha$, select a tilt from a finite candidate set
by minimizing the pilot prediction
\begin{equation}
 \widehat R(s)=\sqrt{\frac{2\widehat V_s\log(4/\alpha)}{n_\star}}
       +\frac{7b_s\log(4/\alpha)}{3(n_\star-1)},
 \label{eq:pilotradius}
\end{equation}
where $\widehat V_s$ estimates the variance of the unnormalized weight.
The production tilt is then frozen. Conditional on the pilot, independent
production samples are iid with analytical bound $b_s$. The reported
interval uses these production samples.

\subsection{Simultaneous Finite-Sample Brackets}
Let $q_k=w_k/b$ be independent samples in $[0,1]$, with mean $\bar q$ and unbiased sample variance $S_q^2$. The empirical Bernstein inequality of Maurer and Pontil \cite[Th. 4]{maurer2009}, with failure probability $\alpha/2$ allocated to each tail, gives a probability interval with radius
\begin{equation}
 d_n(\alpha)=\sqrt{\frac{2S_q^2\log(4/\alpha)}{n}}
 +\frac{7\log(4/\alpha)}{3(n-1)}.
 \label{eq:EB}
\end{equation}
The interval is $b[\bar q-d_n,\bar q+d_n]$ intersected with $[0,1]$. The analytical bound $b$ holds for every possible sample. For a prespecified family of $M$ probability estimates, using $\alpha=\alpha_{\rm all}/M$ gives simultaneous coverage at least $1-\alpha_{\rm all}$ by a union bound. Independence between different family members is unnecessary, although samples within each estimate must satisfy the stated iid assumption.

For $M$ thresholds inspected at at most $J$ declared sample counts,
allocate $\alpha=\alpha_{\rm all}/(MJ)$ to each interval. This covers
all potential looks and permits stopping according to a declared SNR
width or feasibility rule. Thresholds and tilts may be selected from
an independent pilot; conditioning on it gives the same union-bound
argument for the production family. The reported intervals use the
last completed look specified by the reporting rule.

Let $[\ell_U,u_U]$ cover the exact reduced probability $U_r$.
Theorem~\ref{prop:residual} propagates this sampling interval to
\begin{equation}
 [\ell,u]=[a_r\ell_U,u_U],\qquad \ell\le F_\R(x)\le u.
 \label{eq:propagated}
\end{equation}
When $\ell>0$, a relative numerical certificate is
\begin{equation}
 \widehat\epsilon_r=\frac{u-\ell}{\ell}.
 \label{eq:cert}
\end{equation}
It can exceed the model budget $1/a_r-1$: selecting the rank is
sampling-free, while evaluating its probability adds sampling
uncertainty. A zero lower endpoint leaves relative accuracy
unresolved but can still permit a feasibility decision if $u\le p$.
Section~\ref{sec:designinverse} converts these same intervals into
P1 and P2 decisions. The guarantee assumes exact weight evaluation;
Section~\ref{sec:protocol} describes numerical-input controls.

Refinement must preserve simultaneous coverage. Extra samples may
use the remaining predeclared looks. A changed rank or threshold
grid requires coverage by the declared family or an explicitly
allocated additional failure budget and independent production
samples. If no covered refinement remains, the decision stays
unresolved. Increasing sample count alone need not overcome a
model-error or threshold-grid limitation.

\section{Numerical Results}\label{sec:results}
\subsection{Protocol, Controls, and Reproducibility}\label{sec:protocol}
The synthetic correlated-Gaussian channel calculations use reference rate $R_0=1$ bit/s/Hz ($\theta=1$); other rates follow by effective-SNR conversion. The scalability protocol fixes $N\in\{16,32,64,128,256,512,1024\}$, $W\in\{0.5,1,2,4,8\}$, seven SNRs from 0 to 30 dB, and model budgets of 5\% and 20\%, giving 490 decisions. Positive features use $m=\max\{40,\lceil\pi W\rceil+40\}$. Their omitted diagonal bound is at most $8.73\times10^{-41}$; the largest double-precision covariance-entry reconstruction discrepancy is $3.61\times10^{-15}$. We add a $2\times10^{-12}$ residual-variance cushion. Removing this cushion changes none of the selected ranks. These sensitivity checks leave rigorous interval-arithmetic certification of the covariance and SVD inputs open.

Selected scalar witnesses are recomputed at 80 digits. Bessel identities are checked at 180 digits, and small-array residuals are checked against a complete 80-digit eigensystem. The analytical feature-tail bound and these numerical sensitivity checks are recorded separately.

Table~\ref{tab:receivers} uses one joint 95\% event for all
required-SNR intervals and their differences. Four families receive
failure budget $0.0125$ each: the 98-point broad study, the
twelve-point dense-array refinement at $10^{-5}$, its eleven-point
refinement at $10^{-3}$, and the new low-port designs. The two local
dense families have three potential looks each. Their saved moments
are reused with the smaller per-interval failure probabilities;
no samples are rerun. The reported intervals use the last completed
look, covered by the union over all prescribed looks.

The new P1 family evaluates $(W,N)=(4,10),(3.5,10),(3.25,14)$,
and $(3.125,19)$ at both outage targets. Independent pilots locate
each crossing and choose among four tilts using \eqref{eq:pilotradius}.
Each design freezes thirteen thresholds spaced 0.05 dB apart and
uses cumulative looks of $0.2$, $1$, and $2$ million samples per
threshold. Thus $\alpha=0.0125/312$ covers all potential intervals.
A design stops after a complete look when its required-SNR width
is at most 0.20 dB, or at the sample limit. All eight designs stop
at one million samples per threshold, totaling 104 million draws.

The separate aperture family uses $W\in\{0.5,1,2,4,8\}$,
$N=\lceil10W\rceil+1$, and targets $10^{-3}$ and $10^{-5}$.
Independent training freezes thirteen local thresholds at 0.1-dB
spacing per crossing. Three potential production looks receive
$\alpha=0.05/390$ each; the stopping width is 0.25 dB. P2 has its
own simultaneous budget specified below.

The model budgets serve different numerical purposes. The 20\%
rank study displays compression sensitivity; the dense P1 studies
use 5\%, and the low-port P1/P2 studies use 1\% to reduce model
uncertainty near decision boundaries. At $W=4,N=256$ and 12 dB,
these budgets select ranks 15, 16, and 17, respectively. For
$(4,10)$ all three select rank ten. They are numerical tolerances,
not reliability specifications. The reported P1 stopping rule acts
on total dB width after model and sampling propagation; a fixed
relative probability error has no universal dB equivalent because
the outage slope varies with geometry and operating point. Ranks
are certified at 12 dB for the fixed receivers and 30 dB for the
aperture curve. An accompanying reproducibility supplement will be
uploaded with the manuscript through the IEEE submission system. It
contains the implementation code, frozen protocols, seeds, reported
outcomes, validation records, and figure scripts used here.

\subsection{Large-Port Rank and Residual Scaling}
Figure~\ref{fig:scalability} and Table~\ref{tab:large} show compact certified models across the tested port counts. At 30 dB and a 20\% budget, $N=1024,W=8$ needs 26 retained modes; the 99\%-energy criterion uses 18 and the cited effective-mode count is 17. The certified, energy-based, and effective-mode counts respectively target relative outage error, retained channel energy, and a simplified diversity description.

\begin{table}[t]
\caption{20\% Model Budget: Rank Pairs at (0, 30) dB}
\label{tab:large}\centering\footnotesize
\begin{tabular}{rrrrr}\toprule
$W$&$K^*$&99\% energy&$N=64$&$N=1024$\\\midrule
0.5&3&3&(4, 6)&(4, 6)\\
1&3&4&(6, 8)&(6, 8)\\
2&5&6&(9, 10)&(9, 11)\\
4&9&10&(13, 16)&(14, 16)\\
8&17&18&(22, 25)&(23, 26)\\\bottomrule
\end{tabular}
\end{table}
\begin{figure}[t]
\centering
\subfloat[]{\includegraphics[width=\columnwidth]{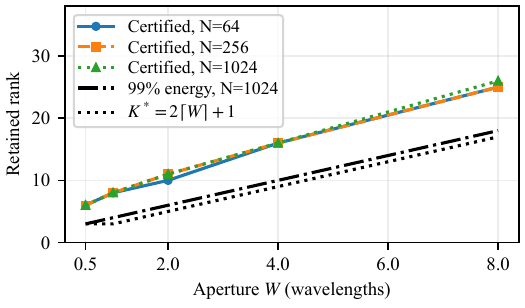}\label{fig:scalability:a}}\par
\subfloat[]{\includegraphics[width=\columnwidth]{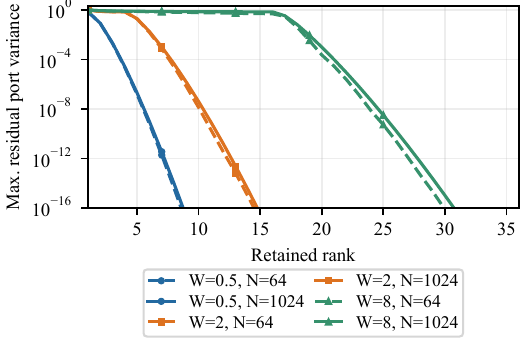}\label{fig:scalability:b}}
\caption{Large-port evaluation. (a) Sufficient ranks for a 20\% model budget at 30 dB, compared with the energy and effective-mode counts. (b) Positive residual variances before the numerical cushion. The selected ranks change modestly with port density on this finite grid. Corollary~\ref{cor:slowrank} gives a separate constructive asymptotic guarantee.}
\label{fig:scalability}
\end{figure}

At fixed aperture, adding modes rapidly reduces residual variance; changing port density has a smaller effect on this grid. The 30-dB rank-to-$K^*$ ratios at $N=1024$ are $8/3$ for $W=1$ and $26/17$ for $W=8$, illustrating dependence on aperture as well as SNR and tolerance. The measured ranks use feature SVD; the direct-feature construction of Corollary~\ref{cor:slowrank} establishes the existence of slowly growing sufficient ranks at fixed aperture.

\subsection{Required SNR and the Cost of Reducing Port Count}
Table~\ref{tab:receivers} connects the low-port P2 designs to their
required SNR through \eqref{eq:snrinverse}. For $W=4$, reducing
$N$ from 256 to ten changes the $10^{-5}$ requirement from
3.30--3.70 dB to 4.40--4.55 dB. Subtracting the intervals on their
joint confidence event gives a power cost of 0.70--1.25 dB.
The ten-port receiver at $W=3.5$ requires 4.90--5.05 dB, a
1.20--1.75-dB cost relative to the same dense four-wavelength
reference. The intervals for $(3.25,14)$ and $(3.125,19)$ complete
the SNR comparison for the small-aperture P2 candidates.

\begin{table}[t]
\caption{Required SNR and Equivalent MRC Index, $R_0=1$ bit/s/Hz}
\label{tab:receivers}\centering\footnotesize
\setlength{\tabcolsep}{3.5pt}
\begin{tabular}{rrrrr}\toprule
$W$&$N$&SNR, $p=10^{-3}$&SNR, $p=10^{-5}$&$L_{\rm eq}$ at $10^{-5}$\\\midrule
2&64&[3.50, 4.00]&[7.25, 7.75]&[4.31, 4.47]\\
4&256&[0.80, 1.00]&[3.30, 3.70]&[5.80, 6.00]\\
4&10&[1.75, 1.85]&[4.40, 4.55]&[5.42, 5.50]\\
3.5&10&[2.10, 2.20]&[4.90, 5.05]&[5.22, 5.29]\\
3.25&14&[1.90, 2.05]&[4.80, 5.00]&[5.24, 5.33]\\
3.125&19&[1.90, 2.05]&[4.85, 5.05]&[5.22, 5.31]\\\bottomrule
\end{tabular}
\par\smallskip\raggedright SNRs are in dB. All intervals share one
95\% event. The continuous index $L_{\rm eq}$ in \eqref{eq:equivmrc}
uses equal average SNR per branch or port and is rounded outward.
\end{table}

The dense-array bracket widens from its original stand-alone
3.40--3.60 dB to 3.30--3.70 dB when its failure budget is shared
with the other receivers. Figure~\ref{fig:design}(a) uses this joint
allocation. At 3.5 dB its variance and range contributions to the
EB radius are $4.23\times10^{-7}$ and $1.45\times10^{-6}$,
respectively. The range term remains the larger contribution.
At 28 GHz, the $W=4$ aperture is about 4.28 cm: its adjacent-port
spacing is 0.168 mm for 256 ports and 4.76 mm for ten ports.
These are alternative sampling densities of the same aperture.

\begin{figure}[t]
\centering
\subfloat[]{\includegraphics[width=\columnwidth]{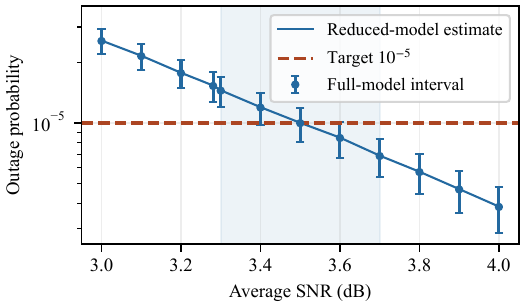}\label{fig:design:a}}\par
\subfloat[]{\includegraphics[width=\columnwidth]{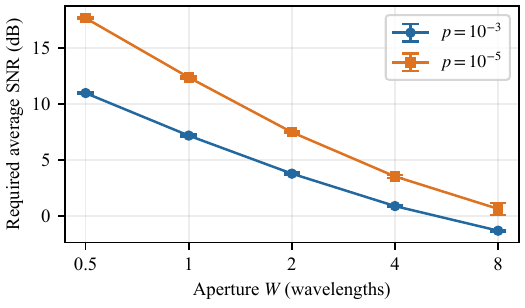}\label{fig:design:b}}
\caption{P1 dimensioning at $R_0=1$ bit/s/Hz. (a) Probability intervals for $W=4,N=256$ under the joint Table~\ref{tab:receivers} budget; shading marks the 3.30--3.70-dB required-SNR bracket. (b) A fixed-density aperture curve with $N=\lceil10W\rceil+1$, using a separate simultaneous family. For a physical-size example, $W=4$ corresponds to about 4.28 cm at 28 GHz.}
\label{fig:design}
\end{figure}

\subsection{Aperture-to-SNR Design Curve}
Figure~\ref{fig:design}(b) converts aperture into a required-SNR range at a prescribed port spacing. For $p=10^{-5}$, increasing $W$ from 0.5 to 1, 2, and 4 gives intervals [17.6,17.8], [12.3,12.5], [7.4,7.6], and [3.4,3.7] dB. At $W=8$, the interval is [0.1,1.2] dB. For $p=10^{-3}$, the corresponding intervals are [10.9,11.1], [7.1,7.3], [3.7,3.9], [0.8,1.0], and [$-1.4,-1.2$] dB. These decisions directly quantify the SNR reduction available along the specified aperture--port curve.

Eight of the ten intervals have width 0.20 dB. The $W=4$ and $W=8$ cases at $10^{-5}$ reach the fixed sample limit with widths 0.30 and 1.10 dB. The 1.10-dB width at $W=8$ reflects the remaining sampling uncertainty. Figure~\ref{fig:design}(b) fixes port density; joint aperture--port choices are examined next.

The choice $N=\lceil10W\rceil+1$ gives spacing $\lambda/10$ on
this grid. It is a dense reference curve, rather than a minimum-port
prescription. The numerical $N\approx10W$ convergence study in
\cite[Sec.~VIII-A]{gpbounds2026} provides context for this density.
We evaluate its outage accuracy separately at the present operating
points. The density is held fixed for P1, while P2 tests individual
integer port counts against a specified reliability requirement.

\subsection{Diversity References and the SNR--Aperture--Port Map}
For independent unit-mean Rayleigh power gains, SC has
$F_{\rm SC}(x)=(1-e^{-x})^L$. The MRC outage is the regularized
lower incomplete gamma function $P(L,x)$ \cite[Sec.~8.2]{dlmf}.
To compare receivers without selecting a particular integer
reference, define the continuous outage-matched index by
\begin{equation}
 P\bigl(L_{\rm eq},\theta/\gamma_{\rm req}\bigr)=p,
 \qquad L_{\rm eq}>0.
 \label{eq:equivmrc}
\end{equation}
For integer $L$, $P(L,x)=1-e^{-x}\sum_{k=0}^{L-1}x^k/k!$.
Since $P(L,x)$ decreases in $L$, the upper required-SNR endpoint
gives the lower $L_{\rm eq}$ endpoint. Table~\ref{tab:receivers}
reports this index at $10^{-5}$: reducing $W=4$ from 256 to ten
ports changes [5.80,6.00] to [5.42,5.50]. This is a
target-dependent comparison of outage distributions, not a
fractional branch count or an asymptotic diversity order.
For context, independent four-branch SC, six-branch SC, and
six-branch MRC require 12.375, 7.993, and 3.281 dB at this target.
MRC collects and coherently combines all branches, normally with
$L$ RF chains; the equal per-branch SNR normalization gives it
additional total collected energy. Ideal SC and FAS require
selection information.

For P2, Fig.~\ref{fig:portmap} summarizes screening of every integer
$N=1,\ldots,24$ at ten apertures and available SNRs 3, 4, 5, 6,
and 8 dB. This interval spans operating points where aperture
expansion is necessary within the tested set and where modest port
counts suffice. The 5-dB slice is one example within this map.
The aperture grid is denser near $W=3$ to resolve the transition
found there; $W=6$ and 8 examine larger apertures. The cap of 24
is an explicit finite search budget, not a physical upper bound.
It permits exact minima when every smaller integer is excluded;
a failed search makes no assertion about larger counts or other
placements. Ports are uniformly spaced with both endpoints included.

The 5-dB production statistics are reused with per-look failure
probability $0.025/720$. Five additional effective-SNR columns,
including the 2-bit/s/Hz case below, contain 1150 stochastic
candidates and use $\alpha=0.025/3450$ for three potential looks.
Independent pilots use $10{,}000$ samples per tilt; each production
case stops at its first resolved look or two million samples.
The new columns require 318.2 million production draws. Together
with exact single-port cases, this gives a 95\% simultaneous
statement for all 1440 tested configurations, covering
Fig.~\ref{fig:portmap} and the additional rate case below.

\begin{figure*}[t]
\centering\includegraphics[width=\textwidth]{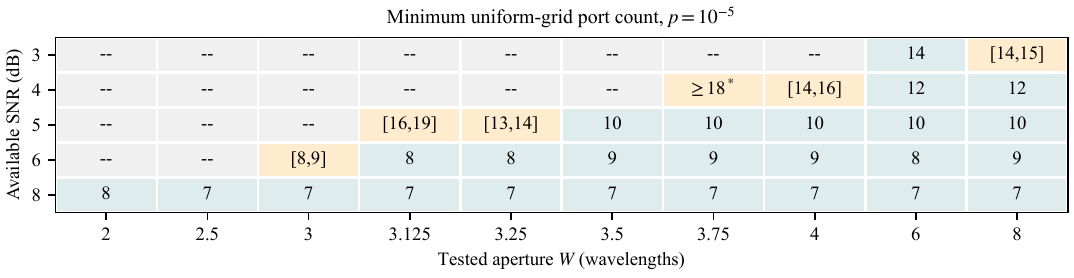}
\caption{P2 minimum-port map at $R_0=1$ bit/s/Hz and $p=10^{-5}$ on tested uniform grids. Pale cyan cells mark certified minima; pale orange cells mark unresolved minimum counts, given as integer intervals or a lower bound. Gray cells (dashes) mean all $N\le24$ fail; $\ge18^*$ means counts below 18 fail and 18--24 remain unresolved, with no certified feasible count. The map uses its own simultaneous confidence family.}
\label{fig:portmap}
\end{figure*}

The map in Fig.~\ref{fig:portmap} shows why the operating point
matters. At 3 dB, all tested counts fail for $W\le4$, while
$W=6$ needs fourteen ports and $W=8$ needs fourteen or fifteen.
At 5 dB, ten is the certified minimum at each tested aperture
from 3.5 to 8; at 8 dB, seven suffice for each tested aperture
from 2.5 to 8. Increasing $W$ or $N$ changes the port locations,
so these uniform-grid minima need not vary monotonically with
aperture. Each minimum excludes every smaller integer individually.

At 5 dB, the nondominated certified-feasible candidates are
$(3.125,19)$, $(3.25,14)$, and $(3.5,10)$. Unresolved lower
counts at the first two apertures may improve that subset frontier.
Table~\ref{tab:receivers} quantifies their corresponding power
requirements. Its separate P1 confidence allocation gives an
upper endpoint of 5.05 dB for $(3.5,10)$; the direct 5-dB P2
test resolves feasibility under the map's own allocation.

Rate also changes feasibility. At $R_0=2$ bit/s/Hz, $\theta=3$,
so 5 dB is equivalent to 0.229 dB at the reference rate. At this
operating point and $p=10^{-5}$, all $N=1,\ldots,24$ fail at each
of the ten tested apertures. For P1, changing rate instead shifts both
required-SNR endpoints by $10\log_{10}(2^{R_0}-1)$ dB: the shifts
are $-3.828$ and $4.771$ dB for 0.5 and 2 bit/s/Hz, respectively.

Independent full-covariance integration using \eqref{eq:polydisc}
confirms the nine-port failure and ten-port feasibility at $W=4$
and 8, 5 dB. Four separate controls use two million samples each
with $\alpha=0.05/4$. The ten-port intervals are
$[3.748,3.756]\times10^{-6}$ and $[3.502,3.510]\times10^{-6}$,
respectively, consistent with the primary results. Dense controls
at $W=3$, $N=121$ and 241 remain unresolved. Continuous selection
satisfies $F_{\rm cont}(x)\le F_{W,N}(x)$; a necessary continuous
aperture limit additionally requires a bound on excursions between
sampled positions. The finite-vector certificate controls the
probability calculation on the specified port set.

\subsection{Energy-Based Rank Selection and Outage Accuracy}\label{sec:energy-accuracy}
A separate motivation family uses $400{,}000$ samples per full/energy model across four cases, with $\alpha=0.05/8$. For $N=12,W=0.5$, the 99\%-energy rule selects rank three. At 5, 10, and 15 dB the full-model point estimates are $3.24\times10^{-2}$, $1.81\times10^{-3}$, and $6.46\times10^{-5}$; the energy-model/full ratios are 1.015, 1.051, and 1.169. Their simultaneous ratio intervals are approximately [0.997,1.034], [1.026,1.076], and [1.135,1.203], with the exact nonnegative model-error direction supplied by covariance order. Thus the error grows over reliability levels used in design, but the lowest-SNR numerical interval does not resolve a positive discrepancy.

For $N=64,W=2$, the six-mode energy approximation gives a ratio of 1.177 at 5 dB, where the positive-feature reference outage is $2.28\times10^{-4}$. The reference includes its controlled feature tail. KL conservatism makes this compression bias a source of potentially excess required SNR; the energy criterion alone does not quantify that margin.

\subsection{Independent-Mode and Block-Model Comparisons}\label{sec:block-model-comparison}
For the large-port design arrays we evaluate the published EDoF and WIM formulas \cite{edof2026} using
\begin{align}
 K&=\min\{N,2\lceil W\rceil+1\},\quad \beta_j=K\lambda_j/N,\\
 F_{\rm EDoF}(x)&=(1-e^{-x})^K,\quad
 F_{\rm WIM}(x)=\prod_{j=1}^K(1-e^{-x/\beta_j}).
\end{align}
\begin{figure}[t]
\centering
\subfloat[]{\includegraphics[width=\columnwidth]{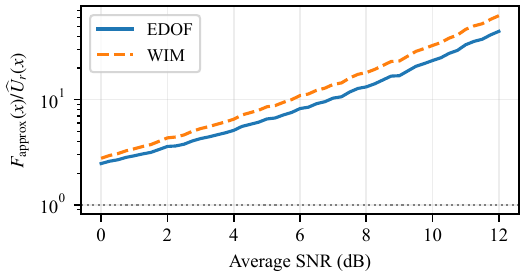}\label{fig:newbaselines:a}}\par
\subfloat[]{\includegraphics[width=\columnwidth]{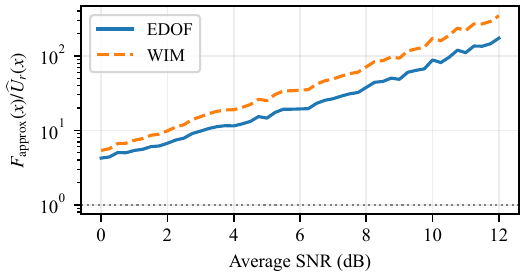}\label{fig:newbaselines:b}}
\caption{EDoF/WIM comparison on the two-array broad grid, using the v1 formulas in \cite{edof2026}. (a) $N=64,W=2$; (b) $N=256,W=4$. The denominator $\widehat U_r$ is the reduced-model point estimate.}
\label{fig:newbaselines}
\end{figure}

Figure~\ref{fig:newbaselines} compares these with the reduced-model estimates. The denominator is the reduced-model point estimate; full-model decisions use the probability brackets specified above. EDoF and WIM provide inexpensive point estimates that summarize effective spatial modes. The plotted ratios describe their agreement with the reference on the specified grid; the residual certificate separately quantifies uncertainty in that reference.

The BCM comparison implements Algorithm 1 of \cite{block2024}, eigenvalue threshold one, fixed $\mu^2=0.97$, and 0.95/0.99 sensitivity. It includes nine independent quadrature-versus-Gaussian checks. For $N=12,W=0.5$, the fit gives blocks $(8,4)$ and BCM/full ratios 0.928, 0.290, and $6.66\times10^{-6}$ at 0, 10, and 20 dB. Repeated residual eigenvalues $1-\mu^2$ differ from the small eigenvalues of the original covariance. The reported discrepancies apply to this stated calibration.

\subsection{Scattering Sensitivity and Consistency Controls}
A controlled scattering experiment uses $N=256,W=4$ and 512
independent proper-Gaussian plane waves at uniformly spaced angles,
with powers proportional to $\exp[\kappa\cos(\phi-0.7)]$.
At SNRs of 0, 10, 20, and 30 dB, respectively, the certified rank
tuples for a 20\% model budget are $(14,14,15,16)$ for $\kappa=0$,
$(13,14,15,16)$ for $\kappa=2$, and $(12,13,14,15)$ for
$\kappa=8$. Increasing the angular grid to 1024 changes covariance
entries by at most $1.23\times10^{-15}$. These are specified
finite-angle Gaussian models. A $3\times3$ isotropic square needs
rank eight at 0 and 20 dB, demonstrating the role of geometry
beyond the linear-aperture examples.

Small-array controls check the positive-feature residual against a
complete 80-digit eigensystem, the conditional integral against
independent Gaussian sampling, and the small-ball and geometry
identities in Appendices~\ref{app:polydisc} and~\ref{app:scope}. At $N=12,W=0.5$, model-error certification
and finite-sample output accuracy are separately tested. A common
20\% output-accuracy experiment gives 0.332 s for direct full-model
sampling and 0.359 s for residual screening plus sampling under a
fixed finite-look policy, excluding shared factorization and I/O.
Thus computational savings depend on the workload and estimator.
Additional controls are documented in the same reproducibility
supplement.

\section{Discussion and Research Scope}
The analysis assumes a specified centered Gaussian covariance and ideal
instantaneous selection. Extending it to measured channels requires
models for covariance uncertainty, acquisition errors, and switching
costs. A PSD uncertainty bracket offers one route to propagate covariance
estimation error. Latency and finite-blocklength decoding require
additional performance criteria beyond the outage event considered here.

Candidate ports represent alternative positions or antenna states.
They need not be separate radiators simultaneously present on an
array. The nominal Jakes covariance describes propagation correlation;
port-dependent patterns, efficiency, and coupling require an effective
channel covariance for the chosen implementation. Spacing alone does
not establish negligible coupling. The centimeter examples therefore
specify aperture size, while hardware feasibility remains dependent
on the realization.

Model compression, sampling, threshold spacing, and the number of
simultaneous decisions determine interval widths. The low-port receivers
have required-SNR widths of 0.10--0.20 dB; dense receivers retain larger
sampling costs. The wider eight-wavelength interval motivates tighter
residual-process or weight bounds. The constructive rank result explains
the availability of compact latent models as port count grows at fixed
aperture; the operating map separately quantifies physical resources.

Integer-port screening removes the large gaps of density-based
candidate sets. The resulting minima apply to the tested apertures
and uniform placement. Application-specific placement constraints
can be incorporated by screening another declared candidate set.

The certificate controls finite-threshold outage for the specified
physical-port event. A low computational rank does not determine
the exact high-SNR diversity order, and matching covariance
eigenvalues alone does not fix port-selection outage. Appendix~\ref{app:scope}
gives the corresponding order-of-limits and geometry arguments.

\section{Conclusion}
For a single-RF FAS receiver with known Gaussian covariance and ideal
port selection, the residual certificate separates the latent dimension
needed for outage evaluation from the physical resources needed to meet
a reliability target. Simultaneous probability intervals turn this
certificate into required-SNR brackets and aperture--port decisions.
At 1 bit/s/Hz and outage $10^{-5}$, reducing a four-wavelength
receiver from 256 to ten ports costs 0.70--1.25 dB; the ten-port
receiver requires 4.40--4.55 dB. A 3.5-wavelength, ten-port
receiver requires 4.90--5.05 dB and occupies about 3.75 cm at
28 GHz. These are specified points on an SNR--aperture--port map:
at 3 dB the tested small apertures fail, while at 8 dB seven
uniformly spaced ports suffice at each tested aperture from 2.5
to eight wavelengths. Rate changes are handled by the effective
SNR, and unresolved counts remain explicit. Positive features and
the constructive rank bound support compact evaluation as port
count grows. Tighter large-aperture intervals, continuous-position
bounds, and certified covariance-input arithmetic remain directions
for refinement.

\appendices
\section{Positive Definiteness of the Finite Jakes Matrix}\label{app:jakes}
For distinct normalized positions $p_i$, the identity
\begin{equation}
 J_0(2\pi(p_i-p_j))=\frac1{2\pi}\int_0^{2\pi}
 e^{\mathrm{i}2\pi(p_i-p_j)\cos\phi}\,d\phi
\end{equation}
implies
\begin{equation}
 \mathbf c^H\R\mathbf c=\frac1{2\pi}\int_0^{2\pi}
 \left|\sum_i c_i e^{-\mathrm{i}2\pi p_i\cos\phi}\right|^2d\phi.
\end{equation}
If this is zero, continuity makes the exponential polynomial vanish for every argument in $[-1,1]$. Analytic continuation and its first $N$ derivatives at zero give a Vandermonde system at distinct $p_i$, hence $\mathbf c=\bm0$. The matrix is positive definite for any fixed finite set of distinct ports, even when its smallest eigenvalues are extremely small.

\section{Small-Threshold Behavior of the Tilted Estimator}\label{app:tilt}
For fixed $N$ and $\mathbf C\succ\bm0$, set $s=c/x$, $c>0$. In the rescaled outage domain $\h=\sqrt{x}\z$, the original density becomes locally constant while the proposal tends to a circular Gaussian with rescaled covariance $\I/c$. Direct integration gives
\begin{equation}
 \lim_{x\downarrow0}\frac{\E_s[w_s^2]}{F_{\mathbf C}(x)^2}
 =\left(\frac{e^c-1}{c^2}\right)^N.
 \label{eq:BRE}
\end{equation}
For one complex coordinate, the required integral is
$\int_{|z|\le1}e^{c|z|^2}d^2z=\pi(e^c-1)/c$; the product over coordinates yields \eqref{eq:BRE}. The per-coordinate factor is minimized by the positive root of $e^c(c-2)+2=0$, namely $c_*\simeq1.5936$, yielding approximately $1.544^N$. For fixed $N$ and fixed positive definite covariance, this limit establishes bounded relative error as $x\downarrow0$; the limiting factor depends exponentially on $N$. At sufficiently small $x$, the polydisc estimator can be more accurate than Gaussian tilting.

\section{Additional Probability and Design Identities}\label{app:additional}
\subsection{A Low-Dimensional Integral}\label{app:majorants}
Let $\mathbf A_r=[\sqrt{\lambda_1}\mathbf u_1,\ldots,\sqrt{\lambda_r}\mathbf u_r]$ and $\mathbf a=\mathbf A_r\z$, with $\z\sim\CN(\bm0,\I_r)$. Conditional on $\mathbf a$, the additional components of $\sqrt{\tau_r}\mathbf v$, $\mathbf v\sim\CN(\bm0,\I_N)$, are independent at the physical ports. Thus, for $\tau_r>0$,
\begin{equation}
 L_r(x)=\E_\z\prod_{i=1}^N
 G_2\!\left(\frac{2x}{\tau_r};\frac{2|a_i|^2}{\tau_r}\right),\label{eq:conditional}
\end{equation}
where $G_2(t;\nu)$ is a noncentral chi-square CDF with two degrees of freedom and noncentrality $\nu$. This and \eqref{eq:upper_integral} integrate over $r$ complex variables, or $2r$ real variables. Eliminating the phases of the latent variables without an additional invariance argument would change the event. At $r=0$, \eqref{eq:conditional} reduces to $(1-e^{-x/\lambda_1})^N$ when $\lambda_1>0$, whereas $U_0=1$.

More generally, any diagonal $\mathbf D_r\succeq\mathbf E_r$ gives a valid lower outage bound using covariance $\R_r+\mathbf D_r$ and an analogous product integral with port-specific noise variances. The isotropic choice $\tau_r\I$ gives a directly computable majorant; optimizing a diagonal majorant may tighten the bound.

It is not necessary that $L_r$ increase with $r$. Valid monotone envelopes are
\begin{equation}
 L_r^*=\max_{0\le k\le r}L_k,\qquad
 U_r^*=\min_{0\le k\le r}U_k.
 \label{eq:envelope}
\end{equation}
In exact arithmetic $U_r$ already decreases with $r$. Sampling noise can obscure this ordering in estimated values.

\subsection{Outage-Relative Rank Rule}\label{app:covrank}
\begin{remark}[Covariance-based error control]\label{rem:covrank}
Whenever $L_r(x)>0$,
\begin{equation}
 0\le\frac{U_r-F_\R}{F_\R}\le
 \frac{U_r-L_r}{L_r}.
 \label{eq:relative}
\end{equation}
Consequently, for a requested tolerance $\epsilon>0$,
\begin{equation}
 r_\epsilon(x)=\min\{r:U_r(x)\le(1+\epsilon)L_r(x)\}
 \label{eq:rankrule}
\end{equation}
is a sufficient rank, with $r=N$ always feasible.
\end{remark}
\begin{proof}
Use $F_\R\ge L_r>0$ and $U_r-F_\R\le U_r-L_r$ in the ratio. At $r=N$ the gap is zero.
\end{proof}
The sufficient rank can be conservative when $\tau_r\I$ overstates the structured residual. It is evaluated at the operating threshold $x$.

A complementary lower bound exposes the scale of the residual directly. Choose $0<\delta<\sqrt{x}$. If $v_i=[\mathbf E_r]_{ii}$, a union bound yields
\begin{equation}
 \eta_r(\delta)=\min\!\left\{1,\sum_{i:v_i>0}e^{-\delta^2/v_i}\right\}.
\end{equation}
The triangle inequality then gives
\begin{equation}
 F_\R(x)\ge
 \left[U_r\big((\sqrt{x}-\delta)_+^2\big)-\eta_r(\delta)\right]_+.
 \label{eq:displacement}
\end{equation}
The bound holds for correlated residual ports and can be combined with $L_r$ by taking their maximum. It relates residual amplitude directly to the outage radius $\sqrt{x}$; the design experiments use Theorem~\ref{prop:residual}.

\subsection{Outage-Capacity Conversion}\label{app:capacity}
For a continuous strictly increasing outage CDF, let $x_p=F_\R^{-1}(p)$.
The same endpoints give
\begin{equation}
 \log_2\!\left(1+\bar\gamma\frac{\theta}{\gamma_{\rm suf}}\right)
 \le C_p\le
 \log_2\!\left(1+\bar\gamma\frac{\theta}{\gamma_{\rm nec}}\right),
 \label{eq:capacitybracket}
\end{equation}
where $C_p=\log_2(1+\bar\gamma x_p)$ is ideal-selection outage capacity.

\subsection{Exact Polydisc Identity}\label{app:polydisc}
\begin{proposition}[Finite-threshold small-ball bracket]\label{prop:smallball}
For $\R\succ\bm0$, let $\mathbf Z$ be uniform on the product of $N$ unit complex disks. Then
\begin{equation}
 F_\R(x)=\frac{x^N}{\det\R}
 \E\exp(-x\mathbf Z^H\R^{-1}\mathbf Z),
 \label{eq:polydisc}
\end{equation}
and
\begin{equation}
 \frac{x^N}{\det\R}e^{-x\tr(\R^{-1})/2}
 \le F_\R(x)\le\min\!\left\{1,\frac{x^N}{\det\R}\right\}.
 \label{eq:smallballbound}
\end{equation}
\end{proposition}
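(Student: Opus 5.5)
The plan is a direct change of variables in the Gaussian density, followed by two elementary bounds on the resulting expectation. Since $\R\succ\bm0$, the law $\CN(\bm0,\R)$ has density $\pi^{-N}(\det\R)^{-1}\exp(-\h^H\R^{-1}\h)$ on $\mathbb C^N$. The outage event is the polydisc $\{\h:|h_i|\le\sqrt{x}\ \forall i\}$, so $F_\R(x)$ is the integral of this density over that polydisc.

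\textbf{Step 1: the identity \eqref{eq:polydisc}.} Substitute $\h=\sqrt{x}\,\mathbf z$. In $N$ complex, i.e.\ $2N$ real, coordinates the Jacobian is $x^N$, and the domain becomes the unit polydisc $\mathbb D^N$, whose Lebesgue volume is $\pi^N$. This gives
\begin{equation*}
F_\R(x)=\frac{x^N}{\det\R}\,\frac{1}{\pi^N}\int_{\mathbb D^N}e^{-x\mathbf z^H\R^{-1}\mathbf z}\,d\mathbf z .
\end{equation*}
The normalized integral $\pi^{-N}\int_{\mathbb D^N}(\cdot)\,d\mathbf z$ is exactly $\E$ over $\mathbf Z$ uniform on $\mathbb D^N$, which is \eqref{eq:polydisc}.

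\textbf{Step 2: the upper bound.} Positive definiteness gives $\mathbf Z^H\R^{-1}\mathbf Z\ge0$, so the exponential is at most one. Hence $F_\R(x)\le x^N/\det\R$. The bound $F_\R\le1$ is trivial because $F_\R$ is a probability, so the minimum of the two follows.

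\textbf{Step 3: the lower bound.} The function $e^{-y}$ is convex, so Jensen's inequality gives
\begin{equation*}
\E\exp(-x\mathbf Z^H\R^{-1}\mathbf Z)\ge\exp\bigl(-x\,\E[\mathbf Z^H\R^{-1}\mathbf Z]\bigr).
\end{equation*}
It remains to compute the second moment of $\mathbf Z$. The coordinates $Z_i$ are independent, each uniform on the unit disk, and have zero mean by rotational symmetry. Therefore $\E[\mathbf Z\mathbf Z^H]$ is diagonal with entries $\E|Z_i|^2=\frac1\pi\int_0^{2\pi}\!\int_0^1\rho^3\,d\rho\,d\phi=1/2$. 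This gives $\E[\mathbf Z^H\R^{-1}\mathbf Z]=\tr(\R^{-1}\E[\mathbf Z\mathbf Z^H])=\tr(\R^{-1})/2$. Substituting into the Jensen bound yields the stated lower bound.

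There is no genuine obstacle in this argument. The points that need care are the Jacobian convention for the complex change of variables, namely $x^N$ rather than $x^{N/2}$ or $x^{2N}$, and checking that the off-diagonal cross terms in $\E[\mathbf Z\mathbf Z^H]$ vanish, which follows from independence together with $\E Z_i=0$. As a consistency check, the identity also recovers the asymptotic $F_\R(x)\sim x^N/\det\R$ used in \eqref{eq:detbarrier}: by dominated convergence, the expectation tends to one as $x\downarrow0$.
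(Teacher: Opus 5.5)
Your proposal is correct and follows the paper's proof. You use the substitution $\h=\sqrt{x}\,\mathbf z$ with real Jacobian $x^N$ and polydisc volume $\pi^N$ to get the identity, and the bound of one on the integrand for the upper bound. For the lower bound you apply Jensen's inequality with $\E[\mathbf Z\mathbf Z^H]=\I/2$.
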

\begin{proof}
Integrate the complex Gaussian density over $\mathcal D_x$ and substitute $\h=\sqrt{x}\mathbf z$. The real Jacobian is $x^N$, and the unit polydisc has volume $\pi^N$, giving \eqref{eq:polydisc}. Uniform disk samples have $\E[\mathbf Z\mathbf Z^H]=\I/2$. Jensen's inequality supplies the lower bound; the integrand is at most one, giving the upper bound.
\end{proof}
The familiar asymptote $A_\R(x)=x^N/\det\R$ therefore satisfies
\begin{equation}
 1\le\frac{A_\R(x)}{F_\R(x)}\le
 \exp\!\left(\frac{x}{2}\tr\R^{-1}\right).
 \label{eq:asympgap}
\end{equation}
A sufficient condition for a multiplicative error at most $1+\epsilon$ is
\begin{equation}
 x\le\frac{2\log(1+\epsilon)}{\tr\R^{-1}}.
 \label{eq:onset}
\end{equation}
The leading small-ball power is standard; the finite-threshold form is useful here because small eigenvalues can make \eqref{eq:onset} demand an extremely large SNR. Uniform polydisc sampling of \eqref{eq:polydisc} provides an independent estimator whose relative variance tends to zero as $x\downarrow0$ for fixed $\R$. It can be inefficient before this regime is reached.

\section{Rank, Diversity, and Physical-Port Geometry}\label{app:scope}
\subsection{Diversity and the Order of Limits}
For every fixed finite positive definite covariance,
\begin{equation}
 d(\R):=\lim_{x\downarrow0}\frac{\log F_\R(x)}{\log x}=N.
 \label{eq:diversity}
\end{equation}
If $\R=\mathbf A\mathbf A^H$ has genuine rank $r$ and $\mathbf A$ has full column rank, rescaling the $r$ complex latent coordinates gives
\begin{equation}
 F_\R(x)\sim C_{\mathbf A}x^r,\quad
 C_{\mathbf A}=\pi^{-r}\operatorname{Vol}_{2r}
 \{\z:\|\mathbf A\z\|_\infty\le1\}.
 \label{eq:rankasymp}
\end{equation}
The domain is bounded because $\mathbf A$ is injective. Consequently, for $\R_\delta=\R_r+\delta\I$, every $\delta>0$ gives diversity $N$, whereas $\delta=0$ gives diversity $r$. The rank and diversity-order limits therefore do not commute.

Effective-mode representations, including the model studied in \cite{edof2026}, offer compact descriptions of spatial diversity at finite SNR. For the rank-$r$ covariance considered here, \eqref{eq:rankasymp} gives the corresponding $x^r$ scaling. If the discarded eigenvalues are instead retained and the fixed covariance is positive definite, \eqref{eq:polydisc} gives $x^N$ scaling as $x\downarrow0$. These are different limiting regimes. Our finite-threshold certificate specifies where a reduced model approximates the full physical-port outage within a requested tolerance.

The finite-port Jakes matrix with distinct positions and $W>0$ is strictly positive definite; a short argument is given in Appendix~\ref{app:jakes}. Effective rank nevertheless remains useful at finite precision and finite SNR. The numerical question is how many small modes may be neglected at a requested outage accuracy. Theorem~\ref{prop:residual} addresses that question without identifying effective rank with exact algebraic rank.

\subsection{Why Physical-Port Geometry Matters}
A unitary change of basis preserves Euclidean norm but does not preserve the physical-port maximum. Thus the eigenvalues alone do not determine $F_\R(x)$ for arbitrary $\R$. For example, take eigenvalues $(2.8,0.8,0.3,0.1)$ and construct covariances using a normalized Hadamard matrix and a normalized four-point discrete Fourier matrix. Both have the same spectrum and unit diagonal. Independent Gaussian controls give distinguishable outage probabilities for these matched quantities, as recorded in the reproducibility supplement. This example illustrates why the present certificate retains the physical selection geometry as well as the eigenvalues.

The role of correlation phase can also be illustrated directly. Let $\rho=0.4$ and
\begin{equation}
 \R_+=\begin{bmatrix}1&\rho&\rho\\\rho&1&\rho\\\rho&\rho&1\end{bmatrix},\quad
 \R_-=\begin{bmatrix}1&\rho&\rho\\\rho&1&-\rho\\\rho&-\rho&1\end{bmatrix}.
 \label{eq:counter}
\end{equation}
Both are positive definite and have identical $|R_{ij}|$, whereas $\det\R_+=0.648$ and $\det\R_-=0.392$. Their small-ball constants therefore differ by \eqref{eq:polydisc}: pairwise correlation magnitudes alone do not specify the physical-port outage. The bounds in this paper use PSD covariance order, retaining the covariance information needed for the Gaussian comparison.

\begin{IEEEbiography}[{\includegraphics[width=1in,height=1.25in,clip,keepaspectratio]{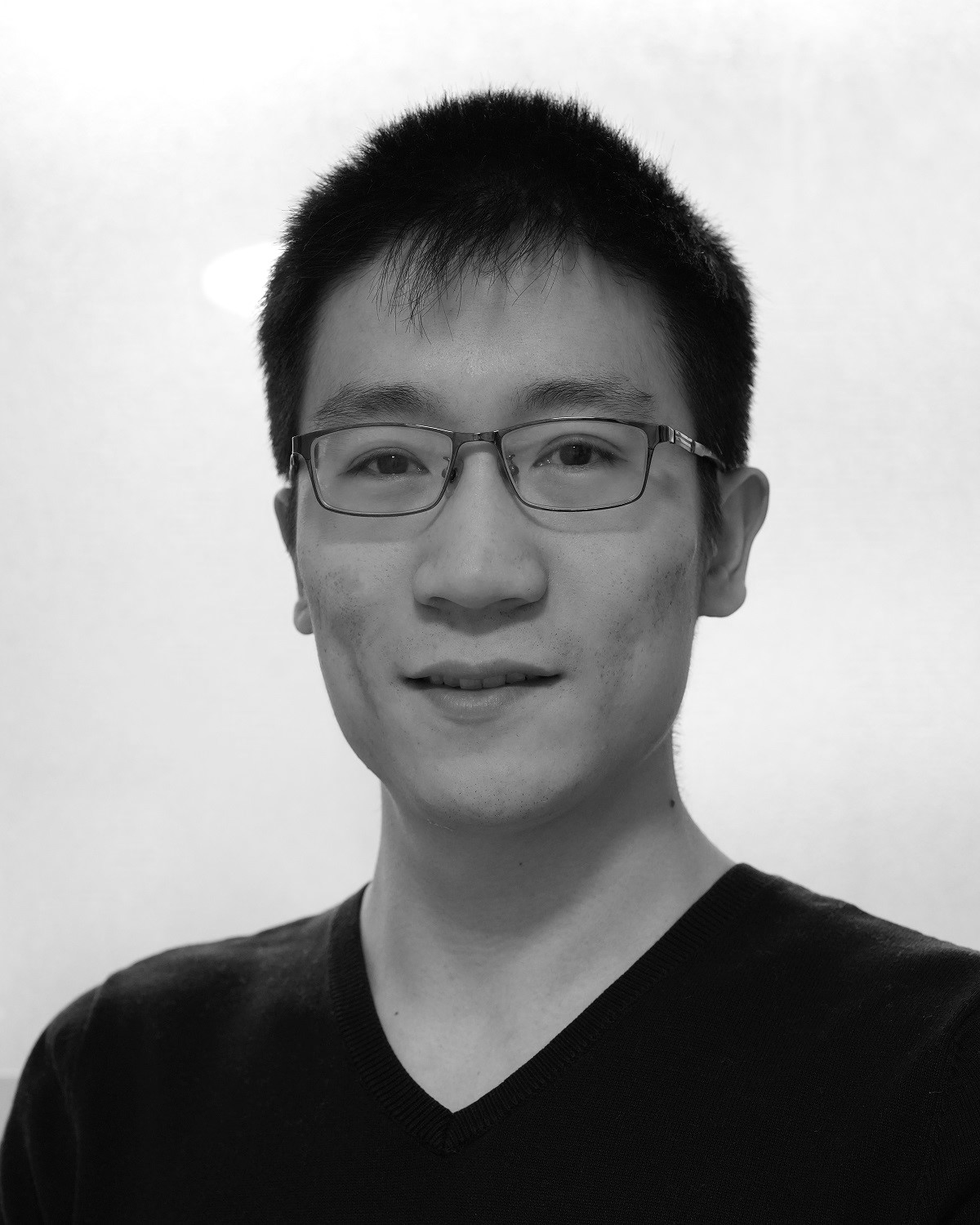}}]{Yuanjun Shen}
(Member, IEEE) was born in Xi'an, China. He received the B.Eng. degree in electronic engineering from Tianjin University, China, in 2014, the M.S. degree in electronic engineering from King's College London, U.K., in 2016, and the Ph.D. degree in electronic and electrical engineering from University College London (UCL), U.K., in 2023. From 2017 to 2023, he was a Research Assistant at UCL, London, U.K. In 2023, he joined the National Key Laboratory of Radar Detection and Sensing, Department of Electronic Engineering, Xidian University. His current research interests include fluid antenna systems, wireless communications, millimeter-wave antennas, gap waveguides, and radar systems.
\end{IEEEbiography}
\end{document}